\documentclass[twocolumn,superscriptaddress,pra]{revtex4-2}
 
\usepackage{amsmath, amssymb, amsthm}
\usepackage{mathtools}
\usepackage{xurl}
\usepackage{float}
\usepackage{graphicx} 
\usepackage{tikz}
\usepackage{caption}
\usepackage{pgfplots}
\pgfplotsset{compat=1.18}
\usetikzlibrary{shapes.geometric, arrows.meta, positioning, calc}
\usetikzlibrary{patterns,decorations.pathreplacing,calligraphy}
\usepackage{booktabs}
\usepackage{hyperref}
\hypersetup{
    colorlinks=true,
    linkcolor=blue,
    citecolor=blue,
    urlcolor=blue
}
\usepackage{algorithm}
\usepackage{algorithmic}
\newtheorem{theorem}{Theorem}[section]

\newtheorem{lemma}[theorem]{Lemma}

\newtheorem{assumption}[theorem]{Assumption}
\theoremstyle{definition}
\newtheorem{definition}[theorem]{Definition}

\allowdisplaybreaks

\begin{document}

\title{Imperfect Commitment in Maximal Extractable Value Auctions}
 
\author{Aleksei Adadurov}
\author{Sergey Barseghyan}
\author{Anton Chtepine}
\author{Antero Eloranta}
\author{Andrei Sebyakin}
\author{Arsenii Valitov}
\affiliation{nuconstruct \\ \textnormal{\texttt{team@nuconstruct.xyz}}}
 
\date{\today}
 
\begin{abstract}
Ethereum block builders run sealed auctions among searchers, but nothing in the protocol forces a builder to honor the auction outcome after observing submitted bundles. This paper studies the commitment problem. We model a builder who defects with probability $\varepsilon$ and, upon defection, replicates a type-specific fraction $\gamma(\tau)$ of the winning MEV opportunity. Searchers anticipate this behavior and choose between a risky first-price bid and a safe deterrence bid that makes frontrunning unprofitable. The resulting equilibrium is piecewise, with the cost of imperfect commitment depending jointly on replicability and competition. Using the \texttt{libmev} dataset, we estimate $\gamma(\tau)$ from right-tail bribe plateaus and decompose observed auction revenue against the surplus a defecting builder could capture. The results show sharp heterogeneity across MEV types: sandwich opportunities are already highly competitive, while naked arbitrage and liquidations leave substantially more surplus exposed to builder defection. Credible MEV auctions, therefore, require not only an auction format, but also constraints on the builder's ability to use observed bid and payload information ex post.
\end{abstract}

\maketitle

\newpage

\section{Introduction}

Block builders on Ethereum run auctions among searchers competing for inclusion in a block. The builder receives a set of bundles, each containing a sealed tip payment that constitutes the searcher's bid, selects winners, and submits the resulting block to the relay. This Maximal Extractable Value (or, later in the paper, MEV) auction is the empirically relevant clearing mechanism for searcher activity: it determines who captures price discrepancies, liquidations, sandwich attacks, and backrun opportunities, and at what cost. Nothing in the protocol forces the builder to honor the auction outcome. Having observed all submitted bundles and their transaction payloads, the builder can replicate the winning searcher's strategy himself, replace the searcher's transaction with his own, and capture the underlying MEV opportunity directly. This paper studies the commitment problem.

The relevant auction-theoretic benchmark is the corrupt-auctioneer literature \cite{compte2005corruption,lengwiler2010auctions,menezes2006corruption,burguet2004competitive}, in which the auctioneer's information advantage becomes a manipulation lever rather than a price-discovery tool. The auction-format question studied in \cite{adadurov2026open} is orthogonal: even a format-optimal sealed auction can be undermined by builder defection ex post. Under affiliated values, the same bid information that mitigates the winner's curse \cite{milgrom1982theory,krishna2010auction} also tells the builder which opportunities are worth replicating and at what cost, so the empirical force that pushes searchers to bid more aggressively is also what makes builder defection more profitable.

We summarize the builder's commitment by a parameter $\varepsilon \in [0,1]$: the share of MEV opportunities on which the builder defects from the honest auction outcome by exploiting observed bid information. At $\varepsilon=0$, the builder is fully committed, and the winning searcher pays his own bid, while at $\varepsilon=1$, the builder always defects, using the observed payload to replicate the winning strategy and capture a fraction $\gamma(\tau)$, where $\tau$ is the MEV type, of the underlying value. Intermediate values capture partial commitment. We treat $\varepsilon$ as a property of the builder and ask how it shapes equilibrium bids, searcher surplus, and observed auction revenue.

We model a builder who runs a sealed-bid first-price auction among $n$ searchers whose values are affiliated and type-specific, with MEV types $\tau \in \{\texttt{sandwich}, \texttt{naked arb}, \texttt{liquidation}, \texttt{backrun}\}$. With defection probability $\varepsilon$, the builder uses observed bid information to replicate a fraction $\gamma(\tau)$ of the winning opportunity whenever this value exceeds the winning bid. Note that the standard revert protection ensures that the searcher's transaction does not execute, and no auction payment is collected. Anticipating this, searchers choose between a risky first-price bid and a safe deterrence bid $b=\gamma(\tau)v$ that makes defection unprofitable. The resulting equilibrium is piecewise: below a type-specific cutoff, searchers bid as in the standard first-price auction, while above the cutoff, they bid enough to deter the builder. In addition, a complementary channel arises when some searchers are integrated with the builder, which converts defection into selective disclosure to a favored bidder; we mention this as a natural extension but do not analyze it in the baseline model.

Using the same \texttt{libmev} dataset as \cite{adadurov2026open}, we recover $\hat{\gamma}(\tau)$ from right-tail bribe plateaus, decompose observed auction revenue against the foregone surplus a defecting builder could capture, and compare bribe distributions across builders. The results suggest sharp heterogeneity across MEV types. Sandwiches are already highly competitive, with bribes near the full extracted value, so the additional value a builder could obtain by defecting is small. Naked arbitrage and liquidations leave much larger gaps between observed bids and estimated replication values, making them more exposed to builder manipulation. We then compare these estimates to the \cite{bergemann2022optimal} honest-disclosure benchmark, interpreting the honest-disclosure threshold as an upper bound on user-beneficial disclosure once strategic builder behavior is admitted.

The contribution is twofold. First, the paper isolates the builder-defection channel in MEV auctions from the auction-format question studied in \cite{adadurov2026open} and characterizes the symmetric equilibrium under partial commitment. Second, it shows that the welfare cost of imperfect commitment is sharply heterogeneous across MEV types, because the replicability parameter $\gamma (\tau)$ varies by strategy. A credible MEV auction is therefore not primarily a matter of choosing among auction formats: it is a matter of constraining the builder's ability to use observed bid information ex post. 

The remainder of the paper is organized as follows. Section \ref{sec:literature} places the analysis within three related literatures: optimal information disclosure, auctioneer corruption, and MEV market structure. Section \ref{sec:model} develops the model and characterizes the symmetric Bayesian Nash equilibrium under partial commitment. Section \ref{sec:data} takes the model to the \texttt{libmev} dataset, estimates $\hat{\gamma}(\tau)$ from right-tail bribe plateaus, and reports the revenue decomposition. Section \ref{sec:discussion} discusses design implications, and Section \ref{sec:conclusion} concludes.

\section{Related Literature}\label{sec:literature}

We organize the literature into three strands. The first treats information disclosure as a design variable chosen by an honest auctioneer; this gives the benchmark against which builder defection must be measured. The second is the auction-corruption tradition, which provides the primitives for an auctioneer who exploits observed bid information ex post. The third documents the institutional setting in which builder discretion operates: the MEV auction and the surrounding PBS market structure. We review each strand below and close with a synthesis of the gap.

\subsection{Optimal Information Disclosure as the Honest Benchmark}

A complementary literature characterizes the revenue-maximizing disclosure policy of an honest auctioneer. \cite{bergemann2022optimal} study the second-price auction and show that the optimal policy fully reveals low values while pooling high values above a critical quantile that depends only on the number of bidders. In our notation, this gives a natural benchmark $\varepsilon^{\mathrm{Berg}}(n)$: the maximum information release that an honest auctioneer would choose, and therefore an upper bound on the disclosure a defecting builder could plausibly justify as competition-enhancing rather than manipulation-enabling.

Related work shows why this benchmark should not be read as a single universal prescription. \cite{bergemann2007information,li2017discriminatory} show that optimal information structures can be asymmetric or type-contingent, while \cite{eso2007price} emphasize that an informed intermediary may be able to capture the informational rent. \cite{board2009revealing} adds an important caution for thin markets: when information changes the ranking of bidders' valuations, the allocation effect can make disclosure revenue-reducing with few bidders. These results imply that the honest-disclosure benchmark itself depends on type-specific competition, which directly motivates the type-by-type structure of our empirical analysis.

\subsection{Corrupt Auctioneers and Credible Auctions}

Our model is closest to the auction-corruption literature, which provides the relevant primitives for an auctioneer who exploits observed bid information ex post: selective readjustment opportunities \cite{compte2005corruption}, surplus sharing with a favored bidder \cite{lengwiler2010auctions}, distorted quality evaluation \cite{laffont1991auction,burguet2004competitive}, bidder-to-bidder collusion \cite{eso2004bribing} and post-bid renegotiation with the winner \cite{menezes2006corruption}. These mechanisms differ in whether they distort allocation, but they share a common feature: information that is valuable for price discovery can also become a lever for favoritism or side-payment extraction once the auctioneer can act on it.

Two lessons are especially relevant for MEV auctions. First, corruption effects are equilibrium effects, not merely transfers: anticipated corruption can sustain implicit collusion, change honest bidders' strategies, and alter participation incentives \cite{compte2005corruption,arozamena2009effect}. Second, the welfare loss depends on the specific channel. Revealing the standing high bid supports last-look or bid readjustment; revealing order contents supports replication-based frontrunning; revealing identities supports side deals with integrated searchers. Our baseline model isolates the frontrunning-by-replication channel, but the same $\varepsilon$ notation extends naturally to the other manipulation channels.

\subsection{MEV Auctions, Builder Market Structure and the Proposer-Builder Separation}

The MEV literature establishes the channels through which builder discretion translates into welfare costs. \cite{daian2020flash} shows how public transaction information creates priority-gas-auction races and consensus risk, while \cite{heimbach2022eliminating} formalizes the AMM sandwich channel through which order details such as size and slippage become directly exploitable. These papers establish replication-based frontrunning as the relevant manipulation channel in our setting: once the builder observes a winning bundle's payload, sandwich, and similar opportunities are mechanically reproducible, while liquidations and specialized arbitrage are reproducible only in part.

A separate literature fixes the market structure in which builder discretion operates. \cite{ma2025analysis} models the orderflow-auction layer under PBS and characterizes how builder advantages translate into equilibrium bids and centralization. \cite{yang2024decentralization,bahrani2024centralization} document and model builder centralization, which reduces the effective number of competing builders and concentrates discretion in a small number of agents. \cite{heimbach2024nonatomic,pai2023structural,oz2024who} further show that orderflow access, integrated searcher-builder structures, and non-atomic arbitrage create persistent informational asymmetries between builders and independent searchers, sharpening the corruption analog.

\subsection{Synthesis and Literature Gap}

Two observations follow from the literature above and motivate the model and empirical analysis developed in the remainder of the paper. First, the auction-corruption literature \cite{laffont1991auction,compte2005corruption,lengwiler2010auctions,menezes2006corruption,burguet2004competitive,eso2004bribing,arozamena2009effect} establishes that the welfare loss from auctioneer manipulation operates through specific channels -- side-payment extraction, selective disclosure to a favored bidder, post-bid readjustment, and bidder-to-bidder collusion -- and that its magnitude depends on the equilibrium response of honest bidders. The classical literature is, however, silent on which channel dominates in any specific institutional setting and on whether the magnitude of distortion varies across the objects being auctioned.

Second, the MEV literature \cite{daian2020flash,heimbach2022eliminating,heimbach2024nonatomic,pai2023structural} identifies the dominant channel in the builder layer as frontrunning by replication. Unlike pure side-payment models, replication-based defection is type-specific: $\gamma(\tau)$ is close to one for sandwich attacks and substantially below one for liquidations and specialized arbitrage. The market-structure literature \cite{yang2024decentralization,bahrani2024centralization,oz2024who} further shows that the effective number of competing builders is small, so anticipated defection plausibly distorts equilibrium bids at first order.

What the literature does not do is combine these strands. No existing paper simultaneously embeds (a) a builder-run sealed-bid MEV auction with affiliated, type-specific searcher values, (b) the auctioneer-corruption logic of \cite{compte2005corruption,menezes2006corruption}, and (c) the type-specific replicability parameter $\gamma(\tau)$ that emerges from the MEV institutional setting. The auction-format question studied in \cite{adadurov2026open} is the natural companion to the present analysis: that paper asks which format an honest builder should choose, while this paper asks what an opportunistic builder can extract regardless of the format chosen.

\section{The Model}\label{sec:model}


\subsection{The Environment}

We start our model with the primitives from \cite{adadurov2026open}. We consider $n$ risk-neutral searchers that compete for a single MEV opportunity of type $\tau$. As in \cite{adadurov2026open}, each searcher gets a latent signal $z_i = \sqrt{\rho}Z + \sqrt{1-\rho} u_i$, where $Z \sim \mathcal{N}(0,1)$ and 
$u_i \overset{iid}{\sim} \mathcal{N}(0,1)$, and values a particular opportunity at $v_i = \exp (\mu + \sigma z_i)$. We use an assumption on the distribution: $v_i \sim \text{LogNormal} (\mu, \sigma^2)$, which holds for all values of the affiliation parameter $\rho \in [0, 1)$, a parameter that captures the common information, such as victim transaction size or price discrepancies.

Additionally, we emphasize the differences between the MEV types $\tau \in \{\texttt{sandwich}, \texttt{naked arb},$ $ \texttt{liquidation}, \texttt{backrun}\}$ that are formalized through the effective bidder count $n(\tau)$ and the affiliation level $\rho(\tau)$. The previous empirical analysis has revealed that, for instance, sandwiches correspond to large $n$ and high $\rho$ (as they are publicly visible and commoditized), while liquidations correspond to small $n$ and moderate $\rho$ (as they are quite specialized and infrastructure-dependent).

\subsection{Builder-Run Auction}

In this setup, we suppose that a single builder operates a sealed first-price auction characterized by a defection parameter $\varepsilon \in [0,1]$. To be precise, $\varepsilon$ is the share of MEV opportunities on which the builder defects from the honest auction outcome by exploiting the bid vector ex post. In the baseline version of the model, we treat $\varepsilon$ as an exogenous variable, i.e., a property of the builder's commitment. The timing of the game, depicted in Figure \ref{fig:timing}, is as follows:

\begin{enumerate}
    \item The builder's defection rate $\varepsilon$ becomes known.
    \item Each of $N$ potential searchers decides whether to enter, and the number of entrants $n \le N$ is realized.
    \item Each entrant $i$ observes her own private signal $z_i$ and submits a (sealed) bid $b_i$ to the builder.
    \item Resolution of the game: with probability $\varepsilon$ the builder defects and uses the observed bids $(b_1, \dots, b_n)$ together with the winning transaction payload to frontrun the winner; with probability $1-\varepsilon$ the builder honors the auction outcome, i.e., the highest bidder wins and pays their own bid.
\end{enumerate}

\begin{figure}[H]
\centering
\begin{tikzpicture}[x=2.15cm,y=1cm, every node/.style={font=\small}]
    \draw[->] (0,0) -- (3.6,0);

    \foreach \x in {0,1,2,3} {
        \draw (\x,0.08) -- (\x,-0.08);
    }

    \node[below=6pt, align=center] at (0,0) {Builder's defection \\ rate $\varepsilon$ is known};
    \node[above=6pt, align=center] at (1,0) {Searchers decide \\ whether to enter};
    \node[below=6pt, align=center] at (2,0) {Each entrant receives $z_i$ \\ and bids $b_i$};
    \node[above=6pt, align=center] at (3,0) {Game \\ resolution};

\end{tikzpicture}
\caption{\centering Timing of the game.}
\label{fig:timing}
\end{figure}

\textbf{Frontrunning.} In case of defection, the builder submits a competing transaction that captures a fraction $\gamma(\tau) \in [0,1]$ of the winner's MEV opportunity. Note that the parameter $\gamma$ is type-specific and reflects the replicability of the extraction strategy. For instance, for complex liquidations, $\gamma \ll 1$, as they require proprietary infrastructure, while sandwich attacks are highly replicable, $\gamma \approx 1$ since the victim's transaction is known.

\textbf{Revert protection.} In our setting, we consider that if the builder frontruns, the searcher's transaction reverts and no bid payment is collected. We integrate this notion into our model as it is a standard feature of bundle-based auction architectures (e.g., MEV-Share or Flashbots), which ensures that the frontrun replaces rather than supplements the searcher's payment. In addition, an important assumption is that the builder's decision upon observing bids is therefore straightforward: frontrun if and only if the frontrun value exceeds the winning bid, i.e., $\gamma (\tau) v_{(1)} > b_{(1)}$.

\subsection{Bidder Equilibrium}\label{sec:equilibrium}
First of all, it is important to make an emphasis on how the searchers act at the optimum, anticipating the builder's ex-post behavior. We characterize the symmetric Bayesian Nash equilibrium of the bidding stage, conditioning on $n$ entrants and based on the following assumptions.

\begin{assumption}[Extractability]
For any MEV type $\tau$, the builder possesses an ex-post frontrunning extraction capability $\gamma(\tau) \in [0,1]$. Furthermore, standard revert protection applies, meaning a frontrun searcher pays $0$.
\end{assumption}

\begin{assumption}[Affiliation]
Suppose that searchers' valuations are affiliated. Let $Y_1 = \max_{j \neq i} v_j$ denote the highest rival valuation. The conditional cumulative distribution function $H(y|v_i) = \mathbb{P}(Y_1 \le y | v_i)$ and conditional density $h(y|v_i)$ are well-defined and continuously differentiable.
\end{assumption}

\noindent Searcher $i$'s payoff with valuation $v_i$ and bid $b_i$ is characterized as follows:

\begin{equation}
    \pi(b_i, v_i) = \begin{cases}
        (1-\varepsilon) H(\beta^{-1}(b_i ) |v_i) (v_i - b_i) \quad \text{if} \ b_i < \gamma v_i \\
        H(\beta^{-1}(b_i ) |v_i) (v_i - b_i) \quad \text{if} \ b_i \ge \gamma v_i
    \end{cases}
\end{equation}

\noindent Note that in the first case, defection triggers frontrunning and the searcher's surplus is zeroed out, as only the no-defection event yields a positive payoff, while in the second case, the bid exceeds the frontrun value and the builder weakly prefers collecting the bid over frontrunning, which means that the searcher is paid regardless of defection.

\begin{lemma}
In the risky regime, i.e. for bids $b_i < \gamma v_i$, the symmetric equilibrium bid function $\beta(v_i)$ with boundary condition of $\beta(0) = 0$ is independent of $\varepsilon$ and satisfies the first-order ordinary differential equation:
\begin{equation}\label{eq:ODE}
    \beta'(v_i) = \frac{h(v_i | v_i)}{H(v_i | v_i)} \bigl(v_i - \beta(v_i) \bigr)
\end{equation}
\end{lemma}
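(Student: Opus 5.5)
The plan is the standard Milgrom--Weber first-order approach applied to the risky branch of the payoff $\pi(b_i,v_i)$. Assume a symmetric equilibrium in which all rivals use a strictly increasing, differentiable bid function $\beta$. Consider a searcher with value $v_i$ who, within the risky regime, bids as if her value were $x$, that is, $b_i=\beta(x)$ with $\beta(x)<\gamma v_i$. Since $\beta^{-1}(\beta(x))=x$, her expected payoff is
\begin{equation*}
\Pi(x,v_i) = (1-\varepsilon)\, H(x\,|\,v_i)\,\bigl(v_i-\beta(x)\bigr).
\end{equation*}
The key observation is that $\varepsilon$ enters only through the multiplicative factor $(1-\varepsilon)$, which does not depend on $x$. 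For $\varepsilon<1$, maximizing $\Pi(\cdot,v_i)$ over $x$ is therefore equivalent to maximizing $H(x|v_i)(v_i-\beta(x))$. That objective is exactly the standard affiliated first-price objective, so both the optimality condition and the equilibrium $\beta$ are independent of $\varepsilon$. If $\varepsilon=1$, the risky payoff is identically zero. I would then either exclude this case or interpret the statement as saying the risky-branch function is the $\varepsilon<1$ one, extended by continuity.

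The steps are as follows. First, differentiate with respect to $x$ and use Assumption (Affiliation), which gives $H$ and $h$ continuously differentiable, to justify this. The derivative is
\begin{equation*}
\frac{\partial \Pi}{\partial x} = (1-\varepsilon)\Bigl[h(x\,|\,v_i)\bigl(v_i-\beta(x)\bigr) - H(x\,|\,v_i)\,\beta'(x)\Bigr].
\end{equation*}
Second, impose truthful mimicry in equilibrium: the optimum must be at $x=v_i$. This is an interior point of the risky region, because $\beta(v_i)<\gamma v_i$ there. Setting the bracket to zero at $x=v_i$ and dividing by $H(v_i|v_i)>0$, which holds for $v_i>0$ under lognormal support, gives equation~\eqref{eq:ODE}. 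Third, pin down the boundary condition. A searcher with value $0$ cannot profitably bid a positive amount, so $\beta(0)=0$. This holds as a limit, since lognormal values lie in $(0,\infty)$. Standard linear ODE theory then gives the unique solution
\begin{equation*}
\beta(v)=\int_0^v y\,dL(y|v), \qquad L(y|v)=\exp\Bigl(-\int_y^v \tfrac{h(t|t)}{H(t|t)}\,dt\Bigr),
\end{equation*}
and this formula makes the absence of $\varepsilon$ explicit.

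The main obstacle is sufficiency together with regime consistency. The ODE is only a necessary local condition. I would check global optimality of $x=v_i$ within the risky region the usual Milgrom--Weber way: write $\partial\Pi/\partial x$ at $x\neq v_i$ using the ODE at $x$. Affiliation implies that $h(x|v)/H(x|v)$ is nondecreasing in $v$, which signs the derivative as positive for $x<v_i$ and negative for $x>v_i$. A second concern is that deviations across the cutoff into the safe regime $b\ge\gamma v_i$ are not covered by this argument. The lemma only claims the characterization within the risky regime, so I would state explicitly that the comparison with the deterrence bid is deferred to the subsequent piecewise-equilibrium result.
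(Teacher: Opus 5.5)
Your proposal is correct and follows essentially the same route as the paper: the factor $(1-\varepsilon)$ scales the risky-branch payoff uniformly, so the maximizer coincides with that of the standard affiliated first-price problem, and the first-order condition evaluated at $x=v_i$ (equivalently $b_i=\beta(v_i)$) yields the ODE. Your additional steps go beyond what the paper's proof supplies without changing the core argument: the explicit Milgrom--Weber solution, the global sufficiency check via the monotone reverse hazard rate, and the separate treatment of the degenerate case $\varepsilon=1$.
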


\begin{proof}
    Note that the factor $1-\varepsilon$ only scales the payoff uniformly but does not change the first-order condition for the optimal bidding problem, as the searcher has the same maximizer of the two problems $\arg \max_{b_i} (1-\varepsilon) g(b_i) = \arg \max_{b_i} g(b_i)$. Consequently, the optimal bid in the risky regime is the standard affiliated first-price sealed-bid strategy. Conditional on winning, the expected utility is $\hat{\pi} (b_i |v_i) = (1-\varepsilon) H(\beta^{-1}(b_i)|v_i)(v_i - b_i)$. Differentiating with respect to $b_i$, setting $b_i = \beta(v_i)$ and equating to zero yields the standard affiliated hazard rate formulation.
\end{proof}

\begin{theorem}[Piecewise Nash Equilibrium]\label{thm:piecewise}
For any fixed defection rate $\varepsilon \in [0,1)$ and extraction capability $\gamma(\tau) \in (0,1]$, there exists a unique cutoff valuation $v^*(\varepsilon, \tau)$ such that the symmetric Bayesian Nash equilibrium bid function $b^*(v_i, \varepsilon, \tau)$ is given by:
\begin{equation}
    b^* (v_i, \varepsilon, \tau) = \begin{cases}
        \beta(v_i) \quad \text{if} \ v_i < v^* (\varepsilon, \tau) \\ 
        \gamma(\tau) v_i \quad \text{if} \ v_i \ge v^* (\varepsilon, \tau)
    \end{cases}
\end{equation}
where $\beta(v_i)$ is the solution to the ODE in Equation \ref{eq:ODE}.
\end{theorem}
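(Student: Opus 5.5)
Our plan is to compare, for each valuation, the best payoff in the risky regime with the best payoff in the safe regime, and to show that the safe regime wins exactly on an upper interval of valuations.

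\textbf{Step 1: the risky-regime candidate.} By the preceding Lemma, the maximizer of the risky payoff is the affiliated first-price bid $\beta(v_i)$. The factor $(1-\varepsilon)$ does not move the argmax. The resulting value is
\[
R(v)=(1-\varepsilon)H(v|v)\bigl(v-\beta(v)\bigr).
\]
This requires $\beta(v)<\gamma v$. If instead $\beta(v)\ge\gamma v$, the risky constraint binds and the corner $b=\gamma v$ is already in the safe regime.

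\textbf{Step 2: the safe-regime candidate.} In the safe regime the payoff $H(\beta^{-1}(b)|v)(v-b)$ is maximized subject to $b\ge\gamma v$. Because the unconstrained optimum $\beta(v)$ lies below $\gamma v$ whenever risky bidding is relevant, and the objective is single-peaked in $b$ (a standard consequence of affiliation), the constraint binds and the optimum is $b=\gamma v$. The safe value is
\[
S(v)=H(\tilde\beta^{-1}(\gamma v)|v)\,(1-\gamma)v,
\]
where $\tilde\beta$ is the equilibrium bid function faced. Under the conjectured piecewise strategy, bidding $\gamma v$ ranks the searcher exactly at rank $v$ for $v\ge v^*$, so $S(v)=H(v|v)(1-\gamma)v$.

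\textbf{Step 3: single crossing of the difference.} Define
\[
D(v)=\frac{S(v)-R(v)}{H(v|v)}=(1-\gamma)v-(1-\varepsilon)\bigl(v-\beta(v)\bigr).
\]
At $v=0$ we have $D=0$. Using the ODE from the Lemma, $\beta'(v)$ is of order $(h/H)(v-\beta)$, so the markdown $v-\beta(v)$ grows sublinearly relative to $v$ in the lognormal affiliated model. As $\rho\to$ values stay below 1, the hazard term $h(v|v)/H(v|v)\cdot v$ behaves like $n-1$ times a slowly varying factor.

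We will show that $D$ is strictly increasing once it is nonnegative. This holds if
\[
(1-\gamma)-(1-\varepsilon)\bigl(1-\beta'(v)\bigr)>0
\]
wherever $D\ge 0$. We then define
\[
v^*=\inf\{v: D(v)\ge 0 \text{ and } \beta(v)\le\gamma v\},
\]
with the convention that $v^*=\infty$ is excluded by a limiting argument. Uniqueness of $v^*$ follows from the single crossing.

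We then verify the equilibrium. Monotonicity of $b^*$ holds because at $v^*$ we have $\beta(v^*)\le\gamma v^*$, and on the upper branch the bid $\gamma v$ is increasing. This makes the ranking consistent and makes $\beta^{-1}$ well-defined on each branch. We will also check that no type gains by mimicking a type on the other branch; this uses the same $D$ comparison together with the standard affiliation argument that the first-price payoff is quasi-concave in the mimicked type.

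\textbf{Expected main obstacle.} The single-crossing property in Step 3 is the hardest part. It requires controlling $\beta'(v)$, i.e. the growth of the markdown, uniformly in $v$. Monotone markdown is not automatic from the ODE alone.

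Our first attempt will use affiliation: the ratio $h(v|v)/H(v|v)$ is nonincreasing in an appropriate sense. This makes $v-\beta(v)$ a weighted average of $v-y$ over lower rival values, and hence lets us bound $1-\beta'(v)$. If a global proof fails, we would fall back on establishing existence of a crossing from $D(0)=0$ and the sign of $D$ for large $v$, and obtain uniqueness by a local transversality argument at each root.

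A second subtlety is the possible jump in bids at $v^*$, where $\beta(v^*)<\gamma v^*$. Such a jump leaves a gap in the bid range. We will argue that this gap creates no profitable deviation: bids inside the gap win with the same probability as the bid $\beta(v^*)$ but pay more, or they fall into the risky region.
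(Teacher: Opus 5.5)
The step you flag as the main obstacle is exactly the one the paper does not prove; it assumes it.

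\textbf{How the paper gets single crossing.} The paper compares the two regimes conditional on winning: $(1-\varepsilon)(v-\beta(v))$ against $(1-\gamma)v$. This is your $D$ with $H(v|v)$ cancelled. With $m(v)=(v-\beta(v))/v$ and $\overline{\varepsilon}(v)=1-(1-\gamma)/m(v)$, one has $D(v)=(v-\beta(v))\bigl(\varepsilon-\overline{\varepsilon}(v)\bigr)$. The paper then \emph{imposes} that $m$ is decreasing. That makes $\overline{\varepsilon}$ strictly decreasing and gives single crossing at once. Your derivative condition also follows in one line: $m'<0$ means $1-\beta'<m$, and $D\ge 0$ means $(1-\varepsilon)m\le 1-\gamma$.

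\textbf{Why deriving it from primitives fails.} The property is false in the paper's own environment. At $\rho=0$, $\beta(v)=\mathbb{E}[Y_1\mid Y_1<v]\le\mathbb{E}[Y_1]<\infty$, so $\beta(v)/v\to 0$ as $v\to\infty$. At the other end, the Mills-ratio behaviour of the normal left tail makes $\ln v-\ln Y_1$, given $Y_1<v$, collapse to $0$ as $v\to 0$, so $\beta(v)/v\to 1$. Hence $m$ rises from $0$ to $1$. Then $D>0$ near the origin, while $D=(\varepsilon-\gamma)v+(1-\varepsilon)\beta(v)<0$ for large $v$ when $\varepsilon<\gamma$. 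The crossing runs the wrong way. Your fallback inherits the same problem; local transversality only isolates roots and does not make them unique.

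\textbf{Degenerate cutoff.} Since $D(0)=0$ and $\beta(0)=0$, the point $v=0$ lies in the set defining your $v^*$, so your infimum is trivially $0$. You need a normalized object such as $\overline{\varepsilon}(v)$ or $D(v)/v$.

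\textbf{The jump at $v^*$.} Your treatment here would also fail. For $\varepsilon>0$ the indifference condition gives $\beta(v^*)<\gamma v^*$. A type $v$ slightly below $v^*$ can then bid $\gamma v\in(\beta(v^*),\gamma v^*)$.
\begin{itemize}
\item For that type this bid is safe, whereas $\beta(v^*)<\gamma v$ is frontrun with probability $\varepsilon$. So the bid is not dominated by $\beta(v^*)$.
\item It wins whenever $Y_1<v^*$, i.e.\ with probability $H(v^*|v)>H(v|v)$. So your $S(v)$ understates it.
\end{itemize}
Set $g(v)=H(v^*|v)(1-\gamma)v$ and $r(v)=(1-\varepsilon)H(v|v)(v-\beta(v))$, which coincide at $v^*$. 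The envelope theorem, the ODE and the indifference condition cancel the $\partial H/\partial v_i$ terms and give $g'(v^*)-r'(v^*)=H(v^*|v^*)(\varepsilon-\gamma)$. This is negative, because $\varepsilon=\overline{\varepsilon}(v^*)<\gamma$ whenever $\beta(v^*)>0$ and $\gamma<1$. So the deviation is strictly profitable on a left neighbourhood of $v^*$, and the no-mimicking check you promise cannot be completed.

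\textbf{What is actually provable.} The paper never runs that check. Its argument stops at the conditional-on-winning comparison and defers the resulting distortion to the Remark as second order. The statement is therefore reachable only in that heuristic sense and under the added monotone-shading assumption. A proof along your lines has to adopt both explicitly rather than try to derive them.
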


\begin{proof}
    Speaking of the safe regime, for $b_i \ge \gamma v_i$, the cheapest bid that eliminates the frontrunning threat is $b_i = \gamma v_i$. One can see that any bid above $\gamma v_i$ is dominated, as it raises the payment without increasing the winning probability.

    The searcher compares the expected surplus in each regime. Conditional on winning, which is certain for $v = v_{(1)}$ in a symmetric equilibrium with a monotone bid function, the payoffs are as follows:

    \[\hat{\pi} (v_i) = (1-\varepsilon) (v_i - \beta(v_i))\]
    \[\tilde{\pi} (v_i) = (1-\gamma) v_i\]

    \noindent where $\hat{\pi} (\cdot)$ stands for the profit function in the risky regime, while $\tilde{\pi} (\cdot)$ stands for the profit function in the safe regime. Setting these expressions equal defines the indifference threshold:

    \begin{equation}\label{eq:indiff}
        \overline{\varepsilon} (v_i) = \frac{\gamma v_i - \beta(v_i)}{v_i - \beta (v_i)}
    \end{equation}

    \noindent Note that this expression is well-defined whenever $\gamma v_i > \beta (v_i)$, which holds precisely when the frontrun value exceeds the equilibrium bid. We assume that proportional shading falls with competition, i.e. $\frac{v_i - \beta(v_i)}{v_i}$ is decreasing with $v_i$, so $\overline{\varepsilon} (v_i)$ is strictly decreasing in $v_i$. Consequently, for any fixed $\varepsilon$, there exists a unique cutoff $v^*(\varepsilon, \tau)$, defined by $\overline{\varepsilon}(v^*) = \varepsilon$. Searchers with $v_i < v^*$ bid $\beta(v_i)$, while searchers with $v_i \ge v^*$ optimize by deviating to the limit price $\gamma(\tau) v_i$.
    
\end{proof}

\textbf{Remark.} In the risky regime, the bid function $\beta (\cdot)$ is derived from the affiliated first-price sealed-bid equilibrium of the full game. The truncation at $v^*_i$ modifies the distribution of competing bids, since the rivals with $v_{j\ne i} \ge v^*_i$ bid $\gamma v_j > \beta(v_j)$ and always outbid risky-regime bidders. A risky-regime bidder with $v_{j\ne i} < v^*_i$ therefore wins only if all rivals also have $v_{k \ne i,j} < v^*_i$. In principle, this changes the conditional order-statistic distribution entering the ODE. Here we argue that this effect is second-order when $v^*_i$ lies deep in the right tail of the log-normal distribution, which is the empirically relevant case, since with $\hat{\sigma} \approx 2.524$, the probability mass above $v^*_i$ is small for moderate $\varepsilon$.

\subsection{Builder's Ex-Ante Optimization}

We now turn to the builder's problem. The builder chooses $\varepsilon \in [0,1]$ to maximize total expected revenue, integrating over the unconditional density of the highest order statistic, $f_{(1)}(v)$.

\begin{definition}
    The expected revenue of the builder is given by the following function:
    \begin{equation}
    \begin{split}
        \mathbb{E}(R(\varepsilon)) = \int^{v^*(\varepsilon)}_0 \bigl[(1-\varepsilon) \beta(v) + \varepsilon \max \{ \beta(v) ,\gamma(\tau) v \} \bigr] \times \\ \times  f_{(1)}(v) \, dv + \int_{v^*(\varepsilon)}^\infty  \gamma(\tau) v  f_{(1)}(v) \, dv 
    \end{split}    
    \end{equation}
\end{definition}

\noindent Hence, given the builder's objective function, to solve for the optimal defection rate, we analyze the marginal revenue.

\begin{lemma}\label{thm:derivative}
    Assuming the frontrunning threat binds, i.e. $\gamma (\tau) v > \beta(v)$, such that $\max \{ \beta(v),\gamma(\tau) v \} = \gamma (\tau) v$, the derivative of the builder's expected revenue with respect to $\varepsilon$ is exactly given by:
    \begin{equation}
    \begin{split}
        \frac{d \mathbb{E}(R)}{d \varepsilon} = \int^{v^*(\varepsilon)}_0 \bigl[\gamma(\tau) v - \beta(v) \bigr] f_{(1)}(v) \, dv - \\- \frac{d v^*}{d \varepsilon} v^* \varepsilon \bigl(1- \gamma (\tau) \bigr) f_{(1)} (v^*)
    \end{split}
    \end{equation}
\end{lemma}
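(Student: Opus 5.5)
The plan is a direct application of the Leibniz integral rule to the expression for $\mathbb{E}(R(\varepsilon))$ in the Definition. The binding-threat assumption simplifies the integrand, and the indifference condition from the proof of Theorem~\ref{thm:piecewise} then lets us simplify the boundary term. Two features make this possible. First, $\beta$ is independent of $\varepsilon$ by the Lemma on the risky regime. Second, $f_{(1)}$ and $\gamma(\tau)$ do not depend on $\varepsilon$. So $\varepsilon$ enters only explicitly in the lower integrand and through the moving limit $v^*(\varepsilon)$.

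\textbf{Step 1 (simplify the integrand).} Under the hypothesis $\gamma(\tau)v>\beta(v)$ on $[0,v^*]$, the lower integrand becomes
\[
L(v,\varepsilon)=(1-\varepsilon)\beta(v)+\varepsilon\gamma v=\beta(v)+\varepsilon\bigl(\gamma v-\beta(v)\bigr).
\]
Its partial derivative in $\varepsilon$ is $\gamma v-\beta(v)$. The upper integrand $U(v)=\gamma v$ does not depend on $\varepsilon$.

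\textbf{Step 2 (Leibniz rule).} Assuming $v^*(\varepsilon)$ is differentiable (implicit function theorem applied to $\overline{\varepsilon}(v^*)=\varepsilon$, using strict monotonicity of $\overline{\varepsilon}$ as in Theorem~\ref{thm:piecewise}), we get
\[
\frac{d\mathbb{E}(R)}{d\varepsilon}=\int_0^{v^*}\bigl[\gamma v-\beta(v)\bigr]f_{(1)}(v)\,dv+\frac{dv^*}{d\varepsilon}\bigl[L(v^*,\varepsilon)-U(v^*)\bigr]f_{(1)}(v^*).
\]
The first term is already the integral in the statement.

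\textbf{Step 3 (boundary term).} We have $L(v^*,\varepsilon)-U(v^*)=(1-\varepsilon)\bigl(\beta(v^*)-\gamma v^*\bigr)$. Here we invoke the indifference condition \eqref{eq:indiff} at the cutoff, $(1-\varepsilon)\bigl(v^*-\beta(v^*)\bigr)=(1-\gamma)v^*$, which gives
\[
(1-\varepsilon)\beta(v^*)=(1-\varepsilon)v^*-(1-\gamma)v^*.
\]
Substituting,
\[
L-U=v^*\bigl[(1-\varepsilon)(1-\gamma)-(1-\gamma)\bigr]=-\varepsilon(1-\gamma)v^*,
\]
which is exactly the claimed boundary term $-\frac{dv^*}{d\varepsilon}\,v^*\,\varepsilon(1-\gamma)f_{(1)}(v^*)$.

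\textbf{Main obstacle.} The only nonroutine point is this last substitution. One has to recognize that the boundary term does not vanish, because revenue is discontinuous across the cutoff: bidders are indifferent there, but the builder is not. The term must therefore be rewritten using the searcher indifference condition rather than left in terms of $\beta(v^*)$. A secondary concern is justifying the differentiability of $v^*$ and the interchange of differentiation and integration. This follows from the $C^1$ regularity in the Affiliation assumption together with integrability of $v f_{(1)}(v)$ for the log-normal.
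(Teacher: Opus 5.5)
Your proof is correct and matches the paper's argument: Leibniz's rule gives the boundary term $\frac{dv^*}{d\varepsilon}[(1-\varepsilon)\beta(v^*)+\varepsilon\gamma v^*-\gamma v^*]f_{(1)}(v^*)$, and the indifference condition $(1-\varepsilon)(v^*-\beta(v^*))=(1-\gamma)v^*$ reduces the bracket to $-\varepsilon(1-\gamma)v^*$. The differences are cosmetic: the paper substitutes $(1-\varepsilon)\beta(v^*)=(\gamma-\varepsilon)v^*$ directly instead of first factoring out $(1-\varepsilon)$, and it leaves the differentiability of $v^*(\varepsilon)$ implicit where you state it as an assumption.
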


\begin{proof}
    Applying Leibniz's integral rule to the objective function yields the inner integral derivative plus a boundary term evaluated at $v^*$:
    \[\frac{d v^*}{d \varepsilon} \biggl[(1-\varepsilon) \beta(v^*) + \varepsilon \gamma(\tau) v^* - \gamma (\tau)v^* \biggr] f_{(1)}(v^*)\]

    \noindent From the indifference condition in Equation \ref{eq:indiff}, we know $(1-\varepsilon)(v^*-\beta (v^*)) = (1-\gamma(\tau))v^*$, which eventually rearranges to $(1-\varepsilon)\beta (v^*) = (\gamma(\tau) - \varepsilon) v^*$. Substituting this identity into the bracketed boundary gives $(\gamma(\tau) - \varepsilon) v^* + \varepsilon \gamma(\tau) v^* - \gamma (\tau) v^* = -v^* \varepsilon (1 - \gamma(\tau))$. Finally, this simplifies the full derivative to the stated form.
\end{proof}

\begin{theorem}[Optimal Builder Defection Rate]\label{thm:optimal}
    Assume the number of entrants $n$ is exogenous and let $\mathcal{V}$ denote the support of $v_{(1)}$. Then, the expected revenue maximization problem $\displaystyle \varepsilon^* = \arg \max_{0 \le \varepsilon \le 1} \mathbb{E}(R)$ admits strictly boundary solutions:
    \begin{enumerate}
        \item If $\gamma(\tau) v > \beta (v)$ for $\forall \ v \in \mathcal{V}$ (high extractability regime), then $\varepsilon^* = 1$.
        \item If $\gamma(\tau) v < \beta (v)$ for $\forall \ v \in \mathcal{V}$ (low extractability regime), then $\varepsilon^* = 0$.
    \end{enumerate}
\end{theorem}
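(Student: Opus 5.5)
The plan is to work directly with the objective $\mathbb{E}(R(\varepsilon))$ of the Definition rather than only with the first-order condition. In each regime we show that the integrand is pointwise monotone in $\varepsilon$ in a fixed direction, and that the boundary movement of $v^*(\varepsilon)$ cannot overturn this sign. A key preliminary observation is that, by Equation~\ref{eq:indiff}, the safe regime is nonempty only where $\gamma v > \beta(v)$. When $\gamma v < \beta(v)$ the safe bid $\gamma v$ lies below the risky bid. The builder then never frontruns a bidder playing $\beta$, so the risky payoff is effectively $v-\beta(v)$ and dominates $(1-\gamma)v$. We will therefore treat the two cases separately according to where the cutoff sits.

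\textbf{Case 2 (low extractability).} If $\gamma v < \beta(v)$ on all of $\mathcal{V}$, then $\max\{\beta(v),\gamma v\} = \beta(v)$ and no searcher chooses the safe regime. We read $v^*(\varepsilon) = \sup\mathcal{V}$ for every $\varepsilon$, so the second integral vanishes. The first integrand is then $(1-\varepsilon)\beta + \varepsilon\beta = \beta$, and $\mathbb{E}(R)$ is constant in $\varepsilon$. To obtain $\varepsilon^* = 0$ as the (strict) selection, we need an additional argument. We will return to the revenue accounting under revert protection: defection with $\gamma v_{(1)} < b_{(1)}$ is weakly worse for the builder. If the builder is assumed to incur any cost of defecting (or to break ties toward honesty), $\varepsilon = 0$ is selected. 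I would state this tie-breaking explicitly, since the displayed objective alone gives only weak optimality.

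\textbf{Case 1 (high extractability).} Here $\max = \gamma v$ everywhere, so Lemma~\ref{thm:derivative} applies on all of $[0,1)$. Its first term $\int_0^{v^*}(\gamma v - \beta)f_{(1)}\,dv$ is strictly positive whenever $v^*>0$, since the integrand is positive on $\mathcal{V}$. For the boundary term, we use the monotonicity established in the proof of Theorem~\ref{thm:piecewise}: $\overline{\varepsilon}(v)$ is strictly decreasing, so $v^*(\varepsilon) = \overline{\varepsilon}^{-1}(\varepsilon)$ satisfies $dv^*/d\varepsilon \le 0$. Consequently $-\frac{dv^*}{d\varepsilon}v^*\varepsilon(1-\gamma)f_{(1)}(v^*) \ge 0$, so $d\mathbb{E}(R)/d\varepsilon > 0$ on $(0,1)$ and the maximum is at $\varepsilon^*=1$. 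If $v^*$ hits $0$ for large $\varepsilon$ (all bidders safe), revenue is constant at $\int \gamma v f_{(1)}$, which equals the $\varepsilon=1$ value. Continuity of $\mathbb{E}(R)$ at $\varepsilon=1$ extends the conclusion to the closed interval. Note also that at $\varepsilon=1$ revenue equals $\int \gamma v f_{(1)}\,dv$, which exceeds $\int \beta f_{(1)}\,dv$, the $\varepsilon=0$ value. This is a direct endpoint comparison that corroborates the derivative argument.

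\textbf{Main obstacle.} The delicate step is the behaviour of $v^*(\varepsilon)$: its differentiability, and the endpoints where the cutoff exits $\mathcal{V}$ (e.g.\ $\overline{\varepsilon}(v) \to \gamma$ as $v\to 0$ and the limit as $v\to\infty$). On these pieces Lemma~\ref{thm:derivative} holds only piecewise. I would handle this by arguing on each interval where $v^*$ is interior and using continuity of $\mathbb{E}(R)$ across the kinks, supported by the endpoint comparison above as a backstop.
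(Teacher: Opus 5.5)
Your route is the paper's: in the high-extractability case you sign the derivative from Lemma~\ref{thm:derivative} (positive integral term, and a nonnegative boundary term because $dv^*/d\varepsilon<0$). In the low-extractability case you collapse the objective to $\int\beta f_{(1)}\,dv$, which is constant in $\varepsilon$. You are right that this gives only weak optimality of $\varepsilon^*=0$ unless a tie-breaking rule or defection cost is added; the paper's proof ends at ``derivative exactly $0$'' and does not address this. The genuine error is in your reason why the safe regime is empty in Case~2. If $\gamma v<\beta(v)$, then $(1-\gamma)v=v-\gamma v>v-\beta(v)$, so conditional on winning the risky payoff does \emph{not} dominate $(1-\gamma)v$. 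Likewise, Equation~\ref{eq:indiff} then gives $\overline{\varepsilon}(v)<0\le\varepsilon$, which taken literally sends every searcher to the bid $\gamma v$. The correct argument must include the probability of winning. Because $\beta(v)\ge\gamma v$, the bid $\beta(v)$ already lies in the no-frontrun region, where $\pi(b,v)=H(\beta^{-1}(b)\mid v)(v-b)$ is the standard first-price payoff that $\beta$ maximizes. Dropping to $\gamma v$ raises surplus-if-win but lowers the chance of winning, so it cannot beat $\beta(v)$; any $b<\gamma v$ earns only $(1-\varepsilon)$ times the standard payoff. Hence $\beta$ is an equilibrium for every $\varepsilon$ and $v^*=\sup\mathcal{V}$.

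In Case~1 the same weak-versus-strict issue arises, and the flat region is not just a contingency. Since $(1-\gamma)\beta\ge 0$, we have $\gamma v-\beta(v)\le\gamma\bigl(v-\beta(v)\bigr)$, so $\overline{\varepsilon}(v)\le\gamma$ for all $v$. For $\gamma<1$, every $\varepsilon\in(\gamma,1]$ therefore puts all searchers in the safe regime ($v^*=0$), and revenue is flat at $\int\gamma v f_{(1)}\,dv$ there. So your claim that $d\mathbb{E}(R)/d\varepsilon>0$ on $(0,1)$ fails on $(\gamma,1)$, as does the paper's on $[0,1)$: $\varepsilon^*=1$ is a maximizer but not the unique one. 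You can also drop the differentiability of $v^*(\varepsilon)$, which you flagged as the main obstacle. Since $\beta<\gamma v$ on $\mathcal{V}$, the risky-region integrand satisfies $(1-\varepsilon)\beta(v)+\varepsilon\gamma v\le\gamma v$. Hence $\mathbb{E}(R(\varepsilon))\le\int\gamma v f_{(1)}\,dv=\mathbb{E}(R(1))$ for every $\varepsilon$. This proves optimality of $\varepsilon=1$ directly and shows exactly which $\varepsilon$ tie with it: $\varepsilon=1$ or $v^*(\varepsilon)=0$.
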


\begin{proof}
    By Theorem \ref{thm:piecewise}, $\overline{\varepsilon} (v, \tau)$ is strictly decreasing in $v$, implying its inverse $v^* (\varepsilon)$ is strictly decreasing in $\varepsilon$. Hence, $\frac{d v^*}{d \varepsilon} < 0$. Because $\varepsilon$, $1-\gamma(\tau)$, and $f_{(1)}$ are all non-negative, the term $- \frac{d v^*}{d \varepsilon} v^* \varepsilon (1- \gamma (\tau)) f_{(1)} (v^*)$ is strictly positive.

    In the high extractability regime, the integrand in the first term, $\gamma (\tau) v - \beta (v)$, is strictly positive. Because the integral term is strictly positive and the boundary term is non-negative and strictly positive for $\varepsilon >0$, $\frac{d \mathbb{E} [R]}{d \varepsilon} > 0$ for $\forall \ \varepsilon \in [0, 1)$. This implies that expected revenue is strictly monotonically increasing, achieving its global supremum at the boundary $\varepsilon^* = 1$.

    Conversely, in the low extractability regime, $\max \{\beta (v), \gamma(\tau) v\} = \beta (v)$, hence, the safe regime integral disappears as $v^* \to \infty$. Thus, the equation trivially collapses to $\int^\infty_0 \beta (v) f_{(1)} (v) \, dv$, making the derivative with respect to $\varepsilon$ exactly $0$.
\end{proof}

\section{Data and Empirical Analysis}\label{sec:data}


\subsection{Data}
We use the same \texttt{libmev} dataset as in \cite{adadurov2026open}: 2.2 million MEV bundle transactions on Ethereum from September 2024 to August 2025, with a total extracted value of \$168.5M and total tips paid to builders of \$101.3M. Each record contains the transaction hash, block number, MEV type, builder, tip, and searcher's retained profit. We define extracted value as tip plus retained profit, so the observed bribe share $b_i/v_i$ is the empirical counterpart of the bid-to-value ratio in the model. Full variable definitions, the log-normal fit with $\hat\mu = 1.102$ and $\hat\sigma = 2.524$, and revenue concentration diagnostics are reported in \cite{adadurov2026open}.

Table \ref{tab:table1} reproduces the summary statistics. Three features are relevant for taking the model to the data. First, sandwich and naked-arbitrage transactions dominate the sample, providing enough observations to recover type-specific bribe schedules. Second, liquidations are rare but individually large, which places them close to the thin-market environment in which the gap between $\gamma(\tau)v$ and the competitive bid $\beta(v)$ can be economically large. Third, mean bribe shares vary substantially across types, indicating substantial variation in how much surplus is already absorbed by honest competition. In the model, this variation determines whether builder defection is profitable: when $b_i/v_i$ is already close to the replicable share $\gamma(\tau)$, the builder gains little from replacing the winner; when it is far below $\gamma(\tau)$, the defection channel is potentially large.

\begin{table}[H]
\footnotesize
\begin{ruledtabular}
\begin{tabular}{ccccccc}
Type & Count & Total & Mean & Median & Std Dev & Bribe \% \\
\hline
Sandwich & 890{,}967 & 66.5 & 74.6 & 3.01 & 1{,}842 & 95\% \\
Naked arb & 915{,}194 & 52.6 & 57.4 & 3.15 & 1{,}529 & 67\% \\
Backrun & 405{,}701 & 28.2 & 69.5 & 2.28 & 2{,}104 & 76\% \\
Liquidation & 4{,}759 & 21.2 & 4{,}462 & 157.3 & 38{,}716 & 68\% \\
\hline
All & 2{,}216{,}621 & 168.5 & 76.0 & 3.01 & 1{,}925 & 79\% \\
\end{tabular}
\end{ruledtabular}
\caption{\centering Summary statistics of extracted values. Total is reported in millions of USDC; mean, median, and
standard deviation is reported in USDC.}
\label{tab:table1}
\end{table}

These cross-type differences map onto the structural parameters $(n(\tau), \rho(\tau))$ from Section \ref{sec:model}: sandwiches are the empirical analogue of the low-extractability regime, while naked arbitrage and liquidations are the natural candidates for a binding frontrunning constraint. The builder field identifies five major builders: beaverbuild.org, Titan, BuilderNet (Beaver), bobTheBuilder, and BuilderNet (Flashbots), which together account for 93\% of transactions; cross-builder comparisons are reported in Appendix \ref{sec:appendix_builders}.

\subsection{Empirical analysis}\label{sec:empirical}

Before estimating the builder-defection channel, it is useful to check that the affiliated-values assumption is empirically plausible. Appendix Figure \ref{fig:affiliation_scatter} plots, within each block and MEV type, the log extracted value of the largest opportunity against the log extracted value of the second-largest opportunity. The relationship is positive in every category. This does not identify the full common-value structure, but it supports the maintained assumption that searchers' valuations are not independent draws.

The empirical analysis has three steps. First, we estimate the type-specific recoverable share $\gamma(\tau)$ from the right tail of the bribe schedule. Second, we compare observed bids with the counterfactual amount $\gamma(\tau)v_i$ that a defecting builder could recover by replicating the winning bundle. Third, we use the honest-disclosure benchmark from \cite{bergemann2022optimal} only as a reference point: it tells us how much information an honest revenue-maximizing seller might reveal, while our model asks how much of that same information becomes dangerous once the builder can defect.

\textbf{Bribe Schedule.} Each panel of Figure \ref{fig:piecewise} plots the average bribe share against log extracted value, using 50 quantile bins. Under the piecewise equilibrium in Theorem \ref{thm:piecewise}, low-value searchers remain in the risky regime and bid according to the standard first-price schedule $\beta(v_i)$. Above the cutoff $v^*(\varepsilon,\tau)$, searchers switch to the deterrence bid $b_i=\gamma(\tau)v_i$. The empirical signature is therefore a rising bribe schedule followed by a right-tail plateau, and the plateau provides an estimate of $\hat{\gamma}(\tau)$.

\vspace{1em}
\begin{center}
    \includegraphics[width=\linewidth]{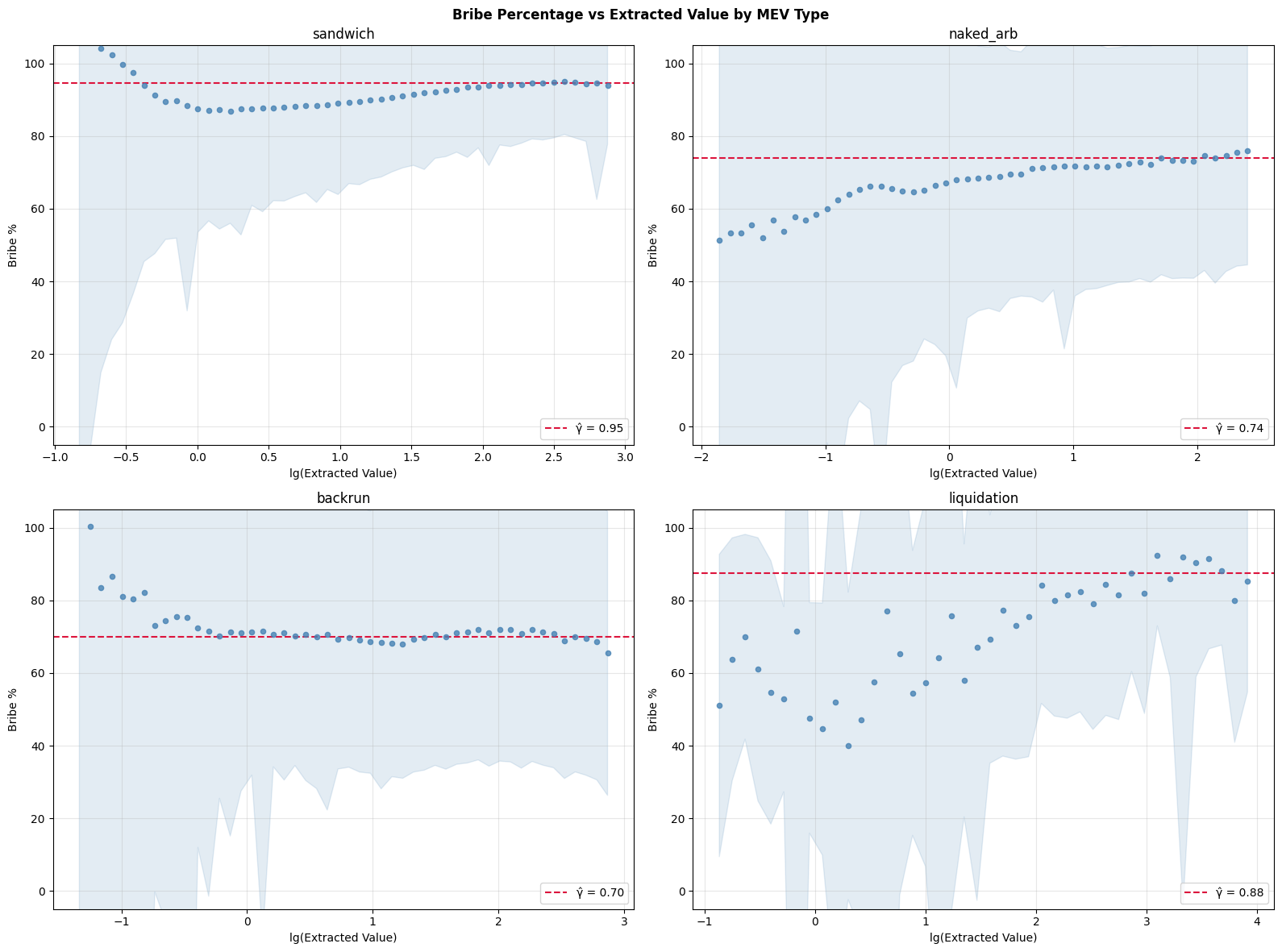}
    \captionof{figure}{\centering Bribe percentage vs extracted value by MEV type. The blue dots are bin means, the shaded band is plus or minus one standard deviation, and $\hat{\gamma}(\tau)$ is estimated from the right-tail plateau in the top 20\% of bins.}
    \label{fig:piecewise}
\end{center}
\vspace{1em}

For sandwiches, the schedule is nearly flat at 90--95\% across the value range. This is consistent with a large effective bidder count and a low residual surplus: the competitive bid $\beta(v_i)$ already absorbs most of the recoverable value, so the inequality $\gamma(\tau)v_i>\beta(v_i)$ rarely binds. There is therefore no sharp structural break to detect. In the language of Theorem~\ref{thm:optimal}, sandwiches are close to the low-extractability regime in which increasing the defection rate has little incremental value for the builder.

Naked arbitrage shows the clearest transition. The bribe share rises from roughly 45\% at low values to about 75\% in the right tail, with $\hat{\gamma}=0.74$. This is the pattern predicted by the piecewise equilibrium: proportional shading falls as valuations rise, and high-value searchers eventually bid close to the deterrence threshold. The economically relevant region is the gap between the low-value competitive bids and the right-tail plateau. In that region, a defecting builder can replace the winning searcher and recover approximately $\gamma(\tau)v_i$, which can be about 30 percentage points above the observed bid share.

Backruns are flatter, at roughly 70\%, with a slight decline in the highest-value bins. The flatness suggests more uniform competition across the value distribution, while the right-tail decline is not predicted by the baseline model and may reflect specialization among the highest-value opportunities. Liquidations are noisier because the sample is much smaller, but the schedule rises from roughly 55\% at low values to 85--90\% in the right tail. The estimate $\hat{\gamma}=0.88$ should therefore be read cautiously: it suggests that some liquidation opportunities are more replicable than expected, but the confidence bands also reflect a thin market with few effective bidders.

\textbf{Revenue Decomposition.} Figure \ref{fig:revenue} reports the empirical counterpart of the builder's defection gain. For each transaction, the potential gain is the positive part of $\hat{\gamma}(\tau)v_i-b_i$, aggregated by MEV type. This corresponds to the incremental revenue available when the builder defects rather than honors the auction outcome, and is the data analogue of the difference between the defection and no-defection terms in $\mathbb{E}(R(\varepsilon))$.

Across all types, the estimated foregone frontrun surplus is \$49.4M, or 48.8\% of observed tips. The largest contribution comes from naked arbitrage: \$24.3M of foregone surplus against \$18.3M of observed tips. This is the type where the model's defection channel is strongest, because $\hat{\gamma}=0.74$ exceeds the mean bribe share of 67\%, leaving a positive gap $\gamma(\tau)v_i-\beta(v_i)$ over a large part of the distribution. The magnitude is amplified by right-tail concentration; a small number of high-value transactions drive much of the \$24M. Appendix Figure \ref{fig:gini} reports the corresponding searcher-level concentration diagnostics.

\vspace{1em}
\begin{center}
    \includegraphics[width=\linewidth]{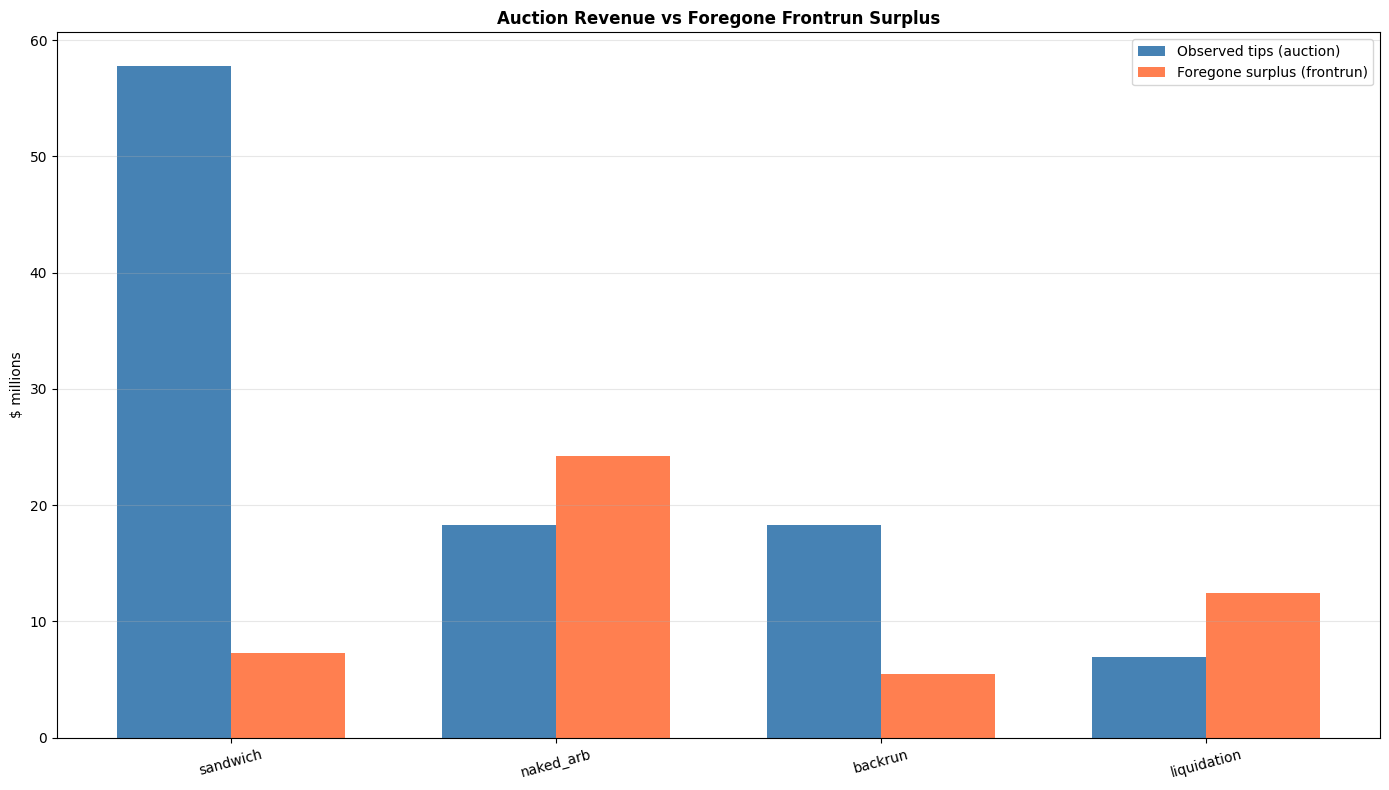}
    \captionof{figure}{\centering Auction revenue (blue) vs foregone frontrun surplus (orange) by MEV type.}
    \label{fig:revenue}
\end{center}
\vspace{1em}

Liquidations show the same mechanism on a smaller and noisier sample: \$12.4M foregone on \$7.0M of observed tips. The estimate $\hat{\gamma}=0.88$ is well above the 68\% mean bribe share, implying that the builder's recoverable value can be much larger than the competitive payment. Because there are only 4{,}759 liquidation observations, the average foregone surplus per liquidation is approximately \$2{,}600. These are precisely the individually large, thin-market opportunities for which the model predicts the largest commitment problem.

Sandwiches sit at the opposite extreme. The estimated foregone surplus is \$7.3M on \$57.8M of observed tips, because competitive bids already extract close to the full opportunity value. Most of the residual comes from the left tail, where very small sandwiches occasionally have bribe shares below the estimated plateau. Backruns are intermediate, with \$5.5M foregone on \$18.3M of observed tips.

\textbf{Bergemann Disclosure Threshold.} The previous exercise estimates the manipulation side of the model: how much surplus a builder could capture after observing bids and payloads. Figure \ref{fig:bergemann_threshold} adds the complementary benchmark from \cite{bergemann2022optimal}. In an honest second-price auction, the revenue-maximizing information policy reveals low values and pools the upper tail above a cutoff chosen so that only a small number of bidders remain effectively tied at the top. This gives a benchmark $\varepsilon^{\mathrm{Berg}}(n)$ that depends on the effective number of competing searchers.

\vspace{1em}
\begin{center}
    \includegraphics[width=\linewidth]{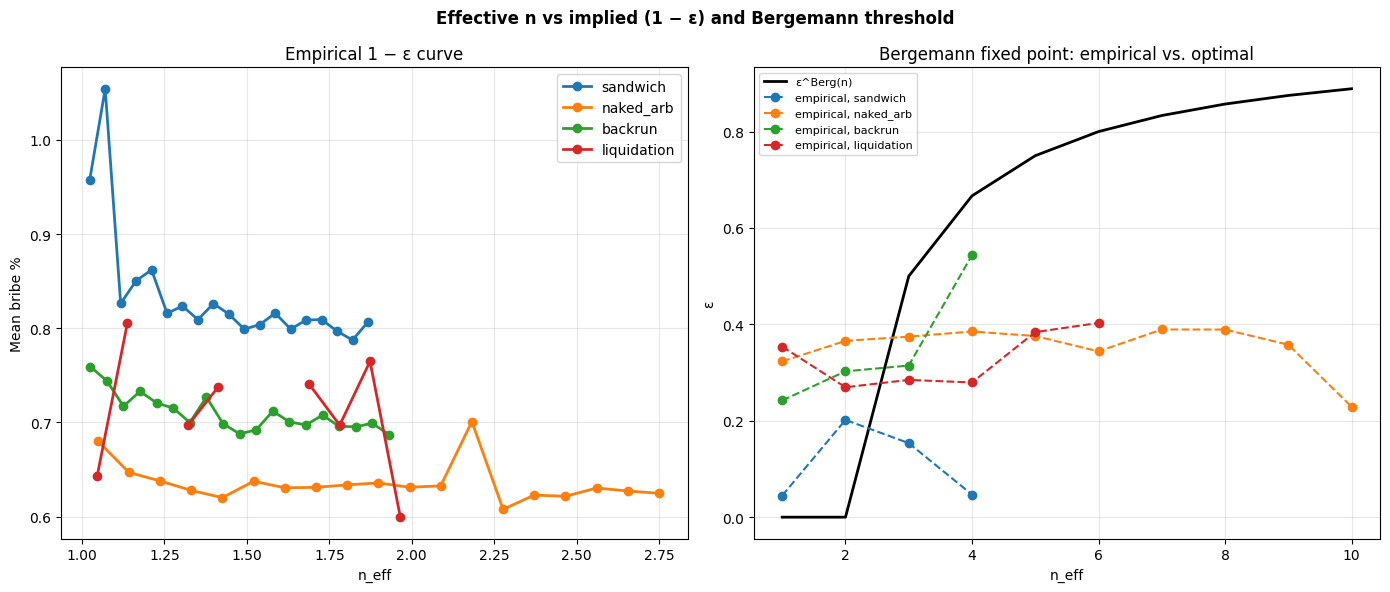}
    \captionof{figure}{\centering Bergemann disclosure benchmark by effective bidder count. The threshold is type-specific through $n(\tau)$: thicker markets can tolerate more information release, while thin markets require more pooling of the upper tail.}
    \label{fig:bergemann_threshold}
\end{center}
\vspace{1em}

The key implication is that the honest benchmark is type-specific because $n(\tau)$ is type-specific. Sandwiches have many effective bidders and already exhibit near-complete competitive extraction, so the honest-disclosure benchmark permits relatively more information release. Liquidations sit at the opposite end: the market is thin, specialized, and noisy, so the benchmark keeps the mechanism closer to pooling. Naked arbitrage and backruns fall between. This ordering is consistent with the revenue decomposition above. Appendix Figure \ref{fig:board} gives the related Board-style allocation-effect diagnostic by effective bidder count.

$\varepsilon^{\mathrm{Berg}}(n)$ should be interpreted as an upper bound on user-beneficial disclosure, not as the builder's optimal defection rate. \cite{bergemann2022optimal} study a seller who discloses information to improve auction revenue but does not privately exploit the signal. Once the builder can use observed bids and payloads to replicate the winner, the relevant object is the commitment parameter $\varepsilon$, and the admissible level of information exposure must be below the honest-disclosure benchmark whenever the defection gain is positive.

\section{Discussion and Design Implications}\label{sec:discussion}

The cost of imperfect builder commitment is type-specific. A single defection parameter $\varepsilon$ has very different welfare implications depending on $\gamma(\tau)$ and on how much of the opportunity value is already absorbed by competitive bidding.

This changes the design problem from choosing an auction format to enforcing credible commitment after bids are observed. A sealed first-price auction is not credible if the builder can inspect the payload, replace the winner, and avoid collecting the bid because the searcher's bundle reverts. The relevant safeguards are therefore those that reduce the usable information available to the builder at the moment of allocation, raise the cost of replication, or make defection observable and punishable. Aggregate or delayed information about competition may still preserve some linkage benefit, but real-time access to the winning payload, bidder identity, or standing high bid creates exactly the ex-post manipulation channel represented by $\varepsilon$.

\section{Conclusion}\label{sec:conclusion}

This paper studies a builder-run MEV auction in which the auctioneer cannot perfectly commit to honoring the auction outcome. After observing bids and payloads, the builder may replace the winning searcher's transaction and capture a type-specific fraction $\gamma(\tau)$ of the opportunity. The model summarizes this commitment problem with a defection parameter $\varepsilon$ and shows that searchers respond by choosing between a risky first-price bid and a safe deterrence bid.

The resulting equilibrium is piecewise. When the competitive bid $\beta(v)$ is high relative to the builder's recoverable value $\gamma(\tau)v$, defection has little value; when $\gamma(\tau)v$ exceeds the competitive bid, the builder's commitment problem binds, and high-value searchers must bid up to the deterrence threshold. The empirical analysis shows that this cost is sharply heterogeneous across MEV types.

The main design implication is that credible MEV auctions cannot be evaluated by auction format alone. The mechanism must also constrain the builder's ability to use observed bid information ex post. The appropriate commitment technology should depend on the MEV type, the effective number of bidders, and the replicability of the opportunity. Future work should estimate $\varepsilon$ and $\gamma(\tau)$ at the builder-searcher level, model selective disclosure to integrated searchers explicitly, and test mechanisms that preserve competition while withholding the payload and identity information that make replication profitable.

\onecolumngrid
\clearpage
\appendix 
\section{Additional Derivations}
\renewcommand{\thetable}{A\arabic{table}}
\setcounter{table}{0}
\renewcommand{\thefigure}{A\arabic{figure}}
\setcounter{figure}{0}

\subsection{Affiliation Diagnostics}

Figure \ref{fig:affiliation_scatter} provides the visual counterpart to the affiliation estimates used in Section \ref{sec:empirical}. Each panel compares the largest and second-largest log extracted values within the same block and MEV type; the positive slope is strongest for liquidations and naked arbitrage, but is present across all categories.

\begin{center}
    \includegraphics[width=\linewidth]{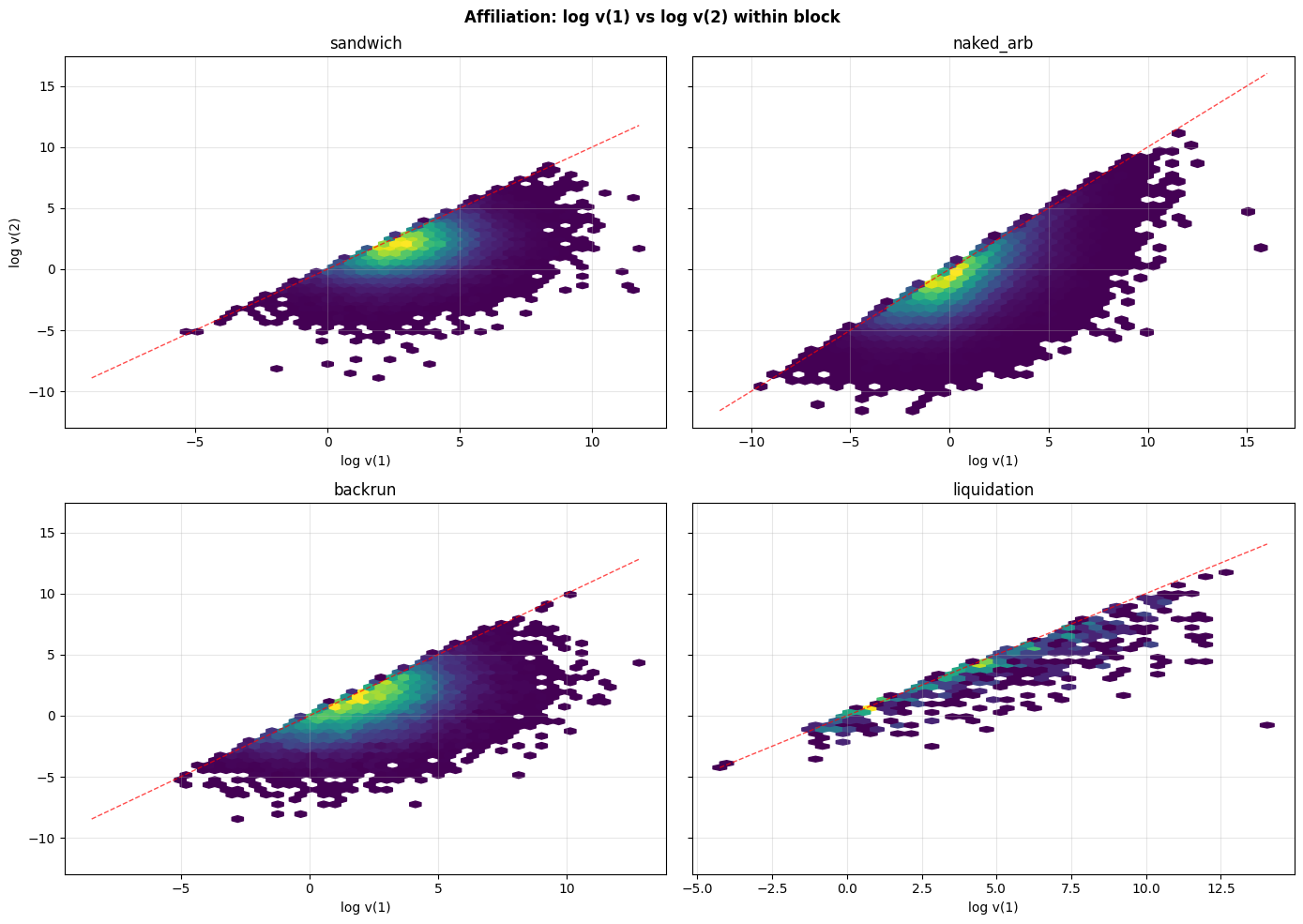}
    \captionof{figure}{\centering Affiliation diagnostic by MEV type. Each panel plots the largest against the second-largest log extracted value within a block.}
    \label{fig:affiliation_scatter}
\end{center}

\subsection{Searcher Concentration}

Figure \ref{fig:gini} shows that extracted MEV is highly concentrated at the searcher level across all types, which supports the right-tail interpretation of the revenue-decomposition results. 

\begin{center}
    \includegraphics[width=\linewidth]{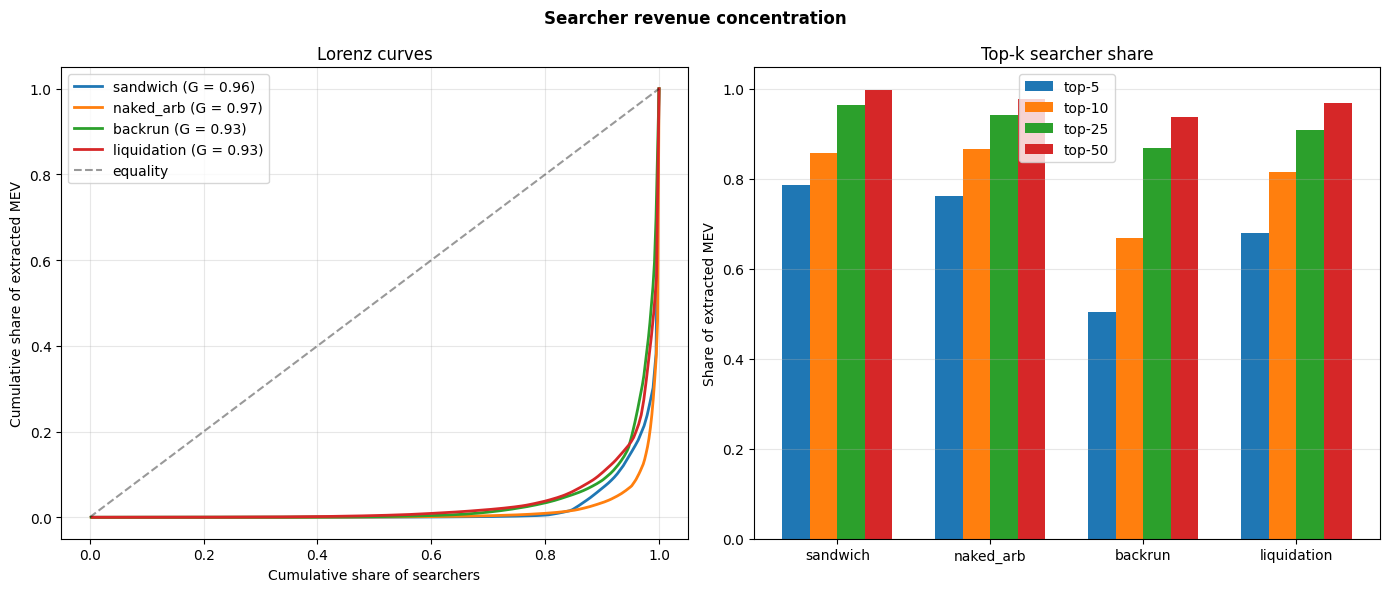}
    \captionof{figure}{\centering Searcher revenue concentration by MEV type. The Lorenz curves and top-$k$ shares show that extracted MEV is highly concentrated across searchers, with top searchers accounting for most extracted value in every category.}
    \label{fig:gini}
\end{center}

\subsection{Allocation-Effect Diagnostic}

Figure \ref{fig:board} reports the Board-style allocation-effect check: revenue and bribe shares do not move uniformly with effective bidder count. This supports the use of type-specific $n(\tau)$ in the honest-disclosure benchmark.

\begin{center}
    \includegraphics[width=\linewidth]{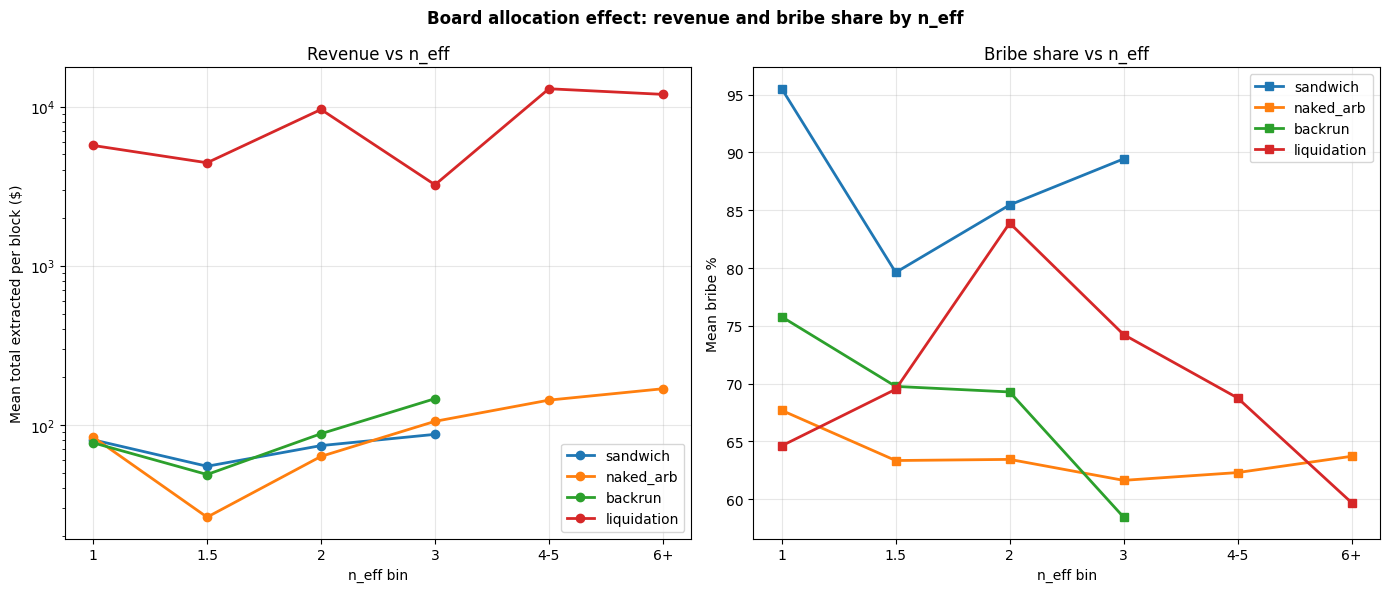}
    \captionof{figure}{\centering Board-style allocation-effect diagnostic. The panels compare revenue and bribe shares across effective-bidder-count bins, illustrating how the sign and strength of disclosure effects can vary with market thickness and MEV type.}
    \label{fig:board}
\end{center}

\subsection{Cross-Builder Bribe Distributions}\label{sec:appendix_builders}

Table \ref{tab:topbuilders} reports summary statistics for the eight largest builders by total extracted value. Mean bribe percentages are homogeneous across the top five builders, while Titan's bribe standard deviation is roughly $3\times$ beaverbuild's, consistent with the integrated searcher-builder extension of the model.

\begin{table}[H]
\begin{ruledtabular}
\begin{tabular}{cccccc}
Builder & Count & Total & Mean bribe & Bribe std & Number of searchers \\
\hline
beaverbuild.org         & 920,935 & 59.50 & 0.810 &  3.505 & 529 \\
Titan (titanbuilder.xyz)& 976,611 & 40.35 & 0.802 & 10.101 & 558 \\
BuilderNet (Beaver)     &  57,981 & 24.76 & 0.778 &  1.759 & 173 \\
bobTheBuilder.xyz       &   5,134 & 10.27 & 0.795 &  0.251 &  68 \\
BuilderNet (Flashbots)  &  64,360 &  7.58 & 0.781 &  2.769 & 184 \\
Ty For The Block        &   3,419 &  5.94 & 1.008 &  0.180 &   9 \\
(unnamed)               &   3,703 &  4.87 & 0.874 &  0.288 &  24 \\
BuildAI (buildai.net)   &   2,134 &  4.62 & 0.833 &  0.314 &  63 \\
\end{tabular}
\end{ruledtabular}
\caption{\centering Top builders by total extracted value. Total is reported in millions of USDC; mean bribe and bribe std are reported as fractions of extracted value.}
\label{tab:topbuilders}
\end{table}

Figure \ref{fig:cross_builder} disaggregates by MEV type. Sandwich bribes cluster at 95--100\% regardless of builder, while naked arb and liquidation exhibit larger cross-builder dispersion, consistent with the type-specific $\gamma(\tau)$ structure.

\begin{center}
    \includegraphics[width=\linewidth]{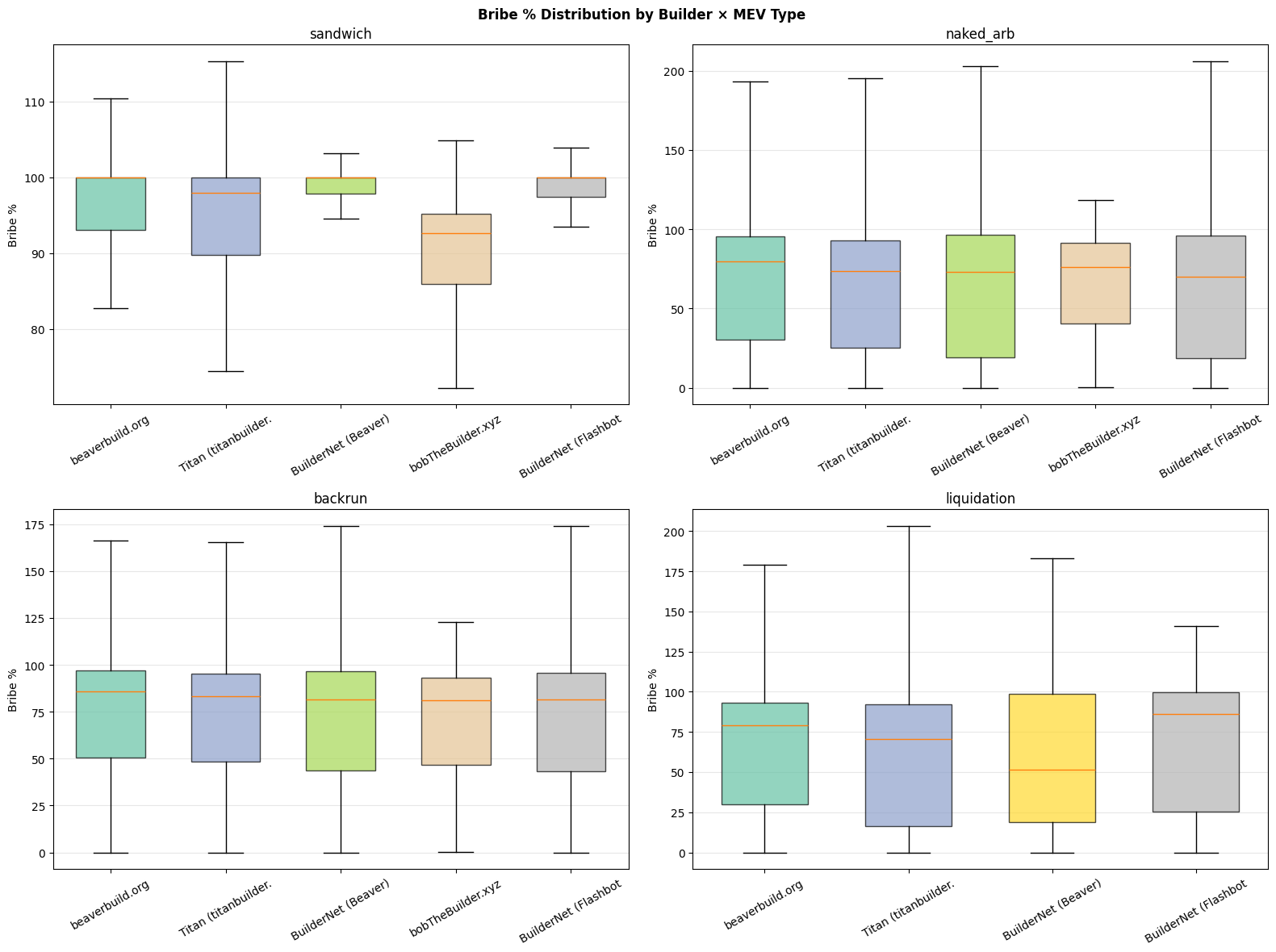}
    \captionof{figure}{\centering Bribe percentage distribution by builder $\times$ MEV type. Boxplots show median, interquartile range, and whiskers at 1.5$\times$ interquartile range; outliers suppressed. The top five builders by total extracted value are shown.}
    \label{fig:cross_builder}
\end{center}

\end{document}